\newtheorem{theorem}{Theorem}
\newtheorem*{law*}{Law}
\theoremstyle{definition}
\newcommand{\pr}{\ensuremath{\mathrm{P}}}
\newcommand{\piu}%
{\textrm{\greektext p}}
\newcommand{\eu}%
{\ensuremath{\mathrm{e}}}
\newcommand{\iu}%
{\ensuremath{\mathrm{i}}}
\providecommand{\newoperator}[3]{%
\newcommand*{#1}{\mathop{#2}#3}}
\newcommand{\tran}%
{\textsf{T}}
\newcommand{\herm}%
{\textsf{H}}
\newcommand{\deltau}%
{\textrm{\greektext d}}
\newcommand{\Deltau}%
{\textrm{\greektext D}}
\providecommand*{\diff}%
{\@ifnextchar^{\DIfF}{\DIfF^{}}}
\def\DIfF^#1{%
\mathop{\mathrm{\mathstrut d}}
\nolimits^{#1}\gobblespace}
\def\gobblespace{%
\futurelet\diffarg\opspace}
\def\opspace{%
\let\DiffSpace\!%
\ifx\diffarg(%
\let\DiffSpace\relax
\else
\ifx\diffarg[%
\let\DiffSpace\relax
\else
\ifx\diffarg\{%
\let\DiffSpace\relax
\fi\fi\fi\DiffSpace}
\newcommand{\Schroedinger}{Schr\"{o}dinger\xspace}
\newcommand{\normalx}[3]{\ensuremath{\mathscr{N}\left(#1 \mid #2,#3\right)}\xspace}
\renewcommand{\piu}{\uppi}
\renewcommand{\deltau}{\updelta}
\let\originalpartial\partial
\let\partial\relax
\newrobustcmd*{\partial}{\text{\rotatebox[origin=t]{10}{\scalebox{0.95}[1]{\ensuremath{\originalpartial}}}}\hspace{-0.05em}}
\newcommand{\CC}{\ensuremath{C}\xspace}
\newcommand{\ee}{\ensuremath{\text{e}}\xspace}
\newcommand{\planckbar}{\ensuremath{\hbar}\xspace}
\newcommand{\hessian}{\ensuremath{\mathbf{H}}\xspace}
\newcommand{\Identitymatrix}{\ensuremath{\mathbf{I}}\xspace}
\newcommand{\Sigmabold}{\ensuremath{\bm{\mathrm{\Sigma}}}\xspace}
\newcommand{\kernBeforeIntegral}{\ensuremath{\kern-0.17em}\xspace}
\newcommand{\mechanics}{mechanics\xspace}
\newcommand{\eqdefA}{=\xspace}
\newcommand{\entropyS}{\ensuremath{\mathrm{S}\xspace}}
\newcommand{\QFT}{\ensuremath{\text{QFT}}\xspace}
\newcommand{\state}{\ensuremath{\text{state}}\xspace}
\begin{document}

\thispagestyle{empty}
\newpage

\title{Quantum Entropy}
\author{{Davi Geiger} and {Zvi M.\ Kedem}}
\affiliation{ Courant Institute of Mathematical Sciences\\
 New York University, New York, New York 10012}

\begin{abstract}

Quantum physics, despite its observables being intrinsically of a  probabilistic nature, does not have a quantum entropy assigned to them.  We propose a quantum entropy that quantify the randomness of a pure quantum state via a conjugate pair of observables forming the quantum phase space. The entropy is dimensionless, it is a relativistic scalar, it is invariant under coordinate transformation of position and momentum that maintain conjugate properties, and under CPT transformations; and its minimum is positive due to the uncertainty principle.

We expand the entropy to also include mixed states and show that the proposed entropy is always larger than von Neumann's entropy. 

We conjecture an entropy law whereby that entropy of a closed system never decreases, implying  a  time arrow for particles physics.

\end{abstract}

\maketitle

\pagebreak

\tableofcontents

\pagebreak

\section{Introduction}

The concept of entropy has been useful in classical physics but extending it to quantum \mechanics (QM)  has been challenging. In classical physics Boltzmann entropy and Gibbs entropy and their respective H-theorems~\cite{gibbs2014elementary} are formulated in the classical phase space, capturing the practical limitations  (described as a randomness) of specifying the degrees of freedom (DOFs) of a system (a classical state). Naturally, in quantum physics the  DOFs specify a quantum state. Von Neumann entropy, in analogy to classical physics,  quantify the randomness of specifying the quantum state, expressed by the classical statistical  coefficients of a mixture of  quantum states.  All pure states are assigned zero von Neumann entropy. 

Our starting point are pure states, where the DOFs are precisely specified. We investigate the inherent quantum randomness associated with the physical observables (the eigenvalues of   Hermitian observable operators). This randomness is fully captured by conjugate pairs of observables  satisfying the uncertainty principle \cite{Robertson1929},  one pair being space and momentum and  the other pair associated with the internal  spin state. 

Our approach to define a quantum entropy is to quantify  both  (i) the inherit randomness of the observables and (ii) the randomness due to the limitations of specifying the DOFs of the quantum state. 

We point out that a quantum entropy is not an observable as there is no entropy operator, instead,  entropy is a scalar function associated with a state. Thus, we also require the quantum entropy to be a scalar invariant under special relativity, point wise transformations of coordinates, and CPT transformations.
 
We propose an entropy defined in quantum phase spaces  that satisfies those conditions. Here we concentrate in the quantum coordinate phase space,  the space of all possible states projected simultaneously in the position and a spatial frequency (momentum) basis. The entropy is applicable to both QM and Quantum Field Theory (QFT).  We define  the spin entropy elsewhere\cite{geiger2021spin}.

\subsection{Previous work}

Von Neumann entropy \cite{von2013mathematische} captures the  randomness associated with not-knowing precisely the quantum state, but does not capture the randomness associated with the observables. Thus, it requires the existence of classical statistics elements (mixed states) in order not to vanish.    Wehrl entropy \cite{wehrl1978general} is based on Husimi's \cite{husimi1940some} quasiprobability distribution, rooted in projecting states to an overcomplete basis representation of coherent states, which are not relativistic invariant. Note that no two coherent states are orthogonal to each other. Therefore, Kolmogorov third axiom for a probability distribution, for elementary events to be mutually exclusive,  is not satisfied. Consequently some probability properties, such as  the monotonicity of probabilities and the complement rule,  are not satisfied by Husimi's quasiprobability distribution. These limitations prevent Wehrl entropy from correctly counting the random values of the observables.  For example,  a quantum state where the projection in position space is a Dirac delta function at $\mathbf{r}_0$ produces  a non-zero value distribution for all possible position coordinates $\mathbf{r}$ in the  classical phase space coordinate $(\mathbf{r},\mathbf{p})$, where $\mathbf{p}$ is the momentum coordinate. Clearly, this  is not the  description of the random position  $\mathbf{r}$ in QM. Indeed Wehrl \cite{wehrl1977} referred to his proposed entropy as a classical entropy for the classical phase space.   

\subsection{Scenarios examined and a new physical law}
We examine some physical scenarios to evaluate the entropy evolution, including a coherent state evolution through  a potential free Dirac equation, and  the hydrogen atom in an excited state transitioning to the ground state with a photon emission. In  these scenarios the entropy increases. Clearly, in these two scenarios,  starting the solution in the identical final state, applying  time reversal,   and running the unitary transformations backwards,  would lead to entropy decreases. In \cite{GeigerKedem2021c} we studied a collision of two spinless particles and as they come close  with each other, due to the superposition of the position wave functions, the entropy can decrease.

We then  conjecture that there is an entropy law, universally applicable to  particle physics, stating that in a closed physical scenario the entropy never decreases. The motivation is that the  inverse of the amount of randomness (information) can not be gained in a closed system.  Such law implies irreversibility of time for all physical scenarios where entropy does not stay constant.  We will examine the consequences of such a law in physics.

\section{Quantum Entropy in Quantum Phase Spaces}
\label{sec:quantum-entropy-def}

We now proceed to define a  quantum entropy. It must account for both the coordinate  and the internal  (spin) DOFs. It should quantify  all the randomness associated with the observables of a quantum state.  Thus, the probability distributions that define the quantum entropy should express the uncertainty relations of the observables. 
Also when considering quantum mixed states, the quantum entropy should also quantify  all the classical randomness associated with the specification of a quantum state. Moreover, a quantum entropy  should  be  invariant under special relativity transformations,  under point-wise transformation of position coordinates (such as a spherical-polar coordinate change), and under CPT transformations.

\subsection{Coordinate-Entropy}
\label{sec:coordinate-DOF}
We associate with a state $\ket{\psi}$ its projection onto the QM eigenstates of the operators $\hat{\mathbf {r}}$ and $\hat{\mathbf {p}}$ , i.e., $\ket{\mathbf {r}}$ and $\ket{\mathbf {p}}$.
Either one projection,  $\psi(\mathbf{r})=\bra{\mathbf{r}}\ket{\psi}$ or  $\phi(\mathbf{p})=\bra{\mathbf{p}}\ket{\psi}$, is sufficient to recover the other one via a Fourier transform. However, to account for the randomness of the observables, both quantities are needed. In order to define the quantum phase space, we will use the density operator  $\rho=\ket{\psi}\bra{\psi}$. Projecting the density operator into the basis $\ket{\mathbf {r}}$ and $\ket{\mathbf {p}}$ we obtain the probability densities $\rho_{\mathrm{r}}(\mathbf{r})=\bra{\mathbf{r}}\rho\ket{\mathbf{r}}=|\psi(\mathbf{r})|^2$ and $\rho_p(\mathbf{p})=\bra{\mathbf{p}}\rho\ket{\mathbf{p}}=|\tilde \phi(\mathbf{p})|^2$.

We define the quantum coordinate phase space as the space of all possible states projected simultaneously on  pairs $\left ( \rho_{\mathrm{r}}(\mathbf{r}),\, \rho_p(\mathbf{p})\right)$.

A time evolution  according to a Hermitian Hamiltonian $H$ is described by $\rho_t=\eu^{-\iu \frac{H}{\hbar} t}\, \rho \, \eu^{\iu \frac{H}{\hbar} t}$ and so in quantum coordinate phase space is given by the pair of probability densities  $\rho_{\mathrm{r}}(\mathbf{r},t)=\bra{\mathbf{r}}\rho_t\ket{\mathbf{r}}=|\psi(\mathbf{r},t)|^2$ and $\rho_p(\mathbf{p},t)=\bra{\mathbf{p}}\rho_t\ket{\mathbf{p}}=|\tilde \phi(\mathbf{p},t)|^2$. 

We claim that is necessary to consider a quantum coordinate phase space in order to capture all the randomness of the observables, and to produce an entropy invariance  under applicable transformation.

Our formulation of the entropy is  motivated by previous work including  Boltzmann and Gibbs~\cite{gibbs2014elementary}, Shanon~\cite{shannon1948mathematical}, Jaynes~\cite{jaynes1965gibbs} and von Neumann\cite{von2013mathematische}. Let the entropy associated only  with the spatial coordinates be the differential entropy  $\entropyS_{\mathrm{r}}=  -\kernBeforeIntegral\int  \rho_{\mathrm{r}} (\mathbf{r},t) \ln \rho_{\mathrm{r}} (\mathbf{r},t) \, \diff^3\mathbf{r}$.  Let
$\mathbf{k}= \frac{1}{\hbar} \mathbf{p}$ be the spatial frequency (Fourier conjugate of $\mathbf{r}$),    $ \rho_k ( \mathbf{k},t)= \frac{1}{\hbar^3} \rho_p ( \mathbf{p},t)$  the associated probability density, and  $\entropyS_{\mathrm{k}}=-\kernBeforeIntegral\int  \rho_{\mathrm{k}} (\mathbf{k},t) \ln \rho_{\mathrm{k}} (\mathbf{k},t) \, \diff^3\mathbf{k}$. Then we define the entropy associated with the quantum coordinate phase space distributions as
\begin{align}
  \entropyS
=-\kernBeforeIntegral\int  \, \rho_{\mathrm{r}} (\mathbf{r},t)  \rho_k ( \mathbf{k},t)\,
 \, \ln \left ( \rho_{\mathrm{r}} (\mathbf{r},t)  \rho_k (\mathbf{k},t) \,  \right) \,
  \diff^3\mathbf{r}\, \diff^3\mathbf{k}
       = \entropyS_{\mathrm{r}}+ \entropyS_{\mathrm{k}}
\,.
\label{eq:differential-entropy}
\end{align}
The entropy is dimensionless and thus, invariant under changes of the units of measurements.

A natural extension of this  entropy to an $N$-particle QM system is
 \begin{align}
     S &= - \kernBeforeIntegral\int \diff^3 \mathbf {r}_1 \diff^3\mathbf {k}_1 \hdots \diff^3 \mathbf {r}_N \diff^3\mathbf {k}_N \,   \rho_{\mathrm{r}}(\mathbf {r}_1,\hdots, \mathbf {r}_N ,t)  \rho_{\mathrm{k}}(\mathbf {k}_1,\hdots, \mathbf {k}_N ,t)
     \\
     & \qquad \quad \times     \ln \left (  \rho_{\mathrm{r}}(\mathbf {r}_1,\hdots, \mathbf {r}_N ,t) \rho_{\mathrm{k}}(\mathbf {k}_1,\hdots, \mathbf {k}_N ,t) \right)
     \\
     &= -\kernBeforeIntegral\int \diff^3 \mathbf {r}_1 \hdots  \kernBeforeIntegral\int \diff^3 \mathbf {r}_N \,   \rho_{\mathrm{r}}(\mathbf {r}_1,\hdots, \mathbf {r}_N ,t)
     \ln \rho_{\mathrm{r}}(\mathbf {r}_1,\hdots, \mathbf {r}_N ,t)
     \\
     & \quad -  \kernBeforeIntegral\int \diff^3 \mathbf {k}_1 \hdots  \kernBeforeIntegral\int \diff^3 \mathbf {k}_N\,   \rho_{\mathrm{k}}(\mathbf {p}_1,\hdots, \mathbf {k}_N ,t)
     \ln \rho_{\mathrm{k}}(\mathbf {k}_1,\hdots, \mathbf {k}_N ,t) \, ,
     \label{eq:entropy-many-particles}
 \end{align}
 where $\rho_{\mathrm{r}}(\mathbf {r}_1,\hdots, \mathbf {r}_N ,t)=|\psi(\mathbf {r}_1,\hdots, \mathbf {r}_N ,t)|^2$ and $\rho_{\mathrm{k}}(\mathbf {k}_1,\hdots, \mathbf {k}_N ,t)=|\phi(\mathbf {k}_1,\hdots, \mathbf {k}_N ,t)|^2 $ are defined in QM via the projection of the state $\ket{\psi_t}^N$ of $N$ particles (the product of $N$ Hilbert spaces) onto  $\bra{\mathbf {r}_1}\hdots\bra{\mathbf {r}_N} $ and  $\bra{\mathbf {k}_1}\hdots\bra{\mathbf {k}_N} $ coordinate systems.

Fields in QFT   are described by  the operators $\Psi(\mathbf {r},t)$, where $(\mathbf {r},t)$  is the space-time, and $\Phi(\mathbf {k},t)$ is the spatial Fourier transform of $\Psi(\mathbf {r},t)$.   A representation for a system of particles is based on  Fock states with occupation number  $\ket{n_{q_1},n_{q_2}, ,\hdots,n_{q_i},\hdots n_{q_K}}$,  where $n_{q_i}$ is the number of particles in a QM state $\ket{q_i}$.  The number of particles in a Fock state is then  $N=\sum_{i=1}^K n_{q_i}$, and a QFT state is described in a Fock space  as  $\ket{\state}=\sum_m \alpha_{m}  \ket{n_{q_1},n_{q_2}, ,\hdots,n_{q_i},\hdots}$, where $m$ is an index over configurations of a Fock state, $\alpha_{m} \in \mathbb{C}$, and $1=\sum_{m}| \alpha_{m}|^2$. The QFT  phase space state associated with any state $\ket{\state}$ is then given by $\left ( \bra{\state} \rho^{\QFT}_{\mathrm{r}}(\mathbf {r},t)\ket{\state},\,\bra{\state} \rho^{\QFT}_{\mathrm{k}}(\mathbf {k},t)\ket{\state}\right) $, where the probability density functions for the spatial coordinates are
\begin{align}
  \rho^{\QFT}_{\mathrm{r}}(\mathbf {r},t)&=\Psi^{\dagger}(\mathbf {r} ,t)\Psi(\mathbf {r},t)
  \\
  \rho^{\QFT}_{\mathrm{k}}(\mathbf {k},t)&=\Phi^{\dagger}(\mathbf {k},t)\Phi(\mathbf {k},t)\,.
  \label{eq:QFT-density-position}
\end{align}
The QFT coordinate-entropy  is then described by \eqref{eq:differential-entropy}, where we  dropped the superscript $\QFT$, as it will be clear which framework is used, QM or  QFT.

\subsection{Mixed Quantum States}

We now extend  the entropy \eqref{eq:differential-entropy} to mixed states. Consider a mixed state formed from $m\ge 2$ pure quantum states $\ket{\psi_j};j=1,\hdots, m$,  defined by the density matrix  $\rho^{\mathrm {M}}=\sum_{j=1}^m \lambda_j \ket{\psi_j}\bra{\psi_j}$, where  $\lambda_j >0$ and $1=\sum_{j=1}^m \lambda_j$. Then, projecting the density matrix onto the quantum coordinate phase space  basis yields  $\rho_{r}^{\mathrm {M}}(\mathbf{r},t)=\bra{\mathbf{r}}\rho^{\mathrm {M}}\ket{\mathbf{r}}=\sum_{j=1}^m \lambda_j \, |\psi_j(\mathbf{r},t)|^2$ and $\rho^{\mathrm {M}}_k(\mathbf{k},t)=\bra{\mathbf{k}}\rho^{\mathrm {M}}\ket{\mathbf{k}}=\sum_{j=1}^m \lambda_j \, |\phi_j(\mathbf{k},t)|^2$. These are the distributions associated with the phase space observables $\mathbf{r}$ and $\mathbf{p}=\hbar\mathbf{k} $.  We can also consider the distributions $\rho_j(\mathbf{r},\mathbf{k},t)=\lambda_j \, |\psi_j(\mathbf{r},t)|^2 \, |\phi_j(\mathbf{k},t)|^2 $, where  $1=\sum_{j=1}^m \kernBeforeIntegral\int \rho_j(\mathbf{r},\mathbf{k},t)\, \diff^3\mathbf{r}\,  \diff^3\mathbf{k}$,  which account for the quantum coordinate phase space as well as the probabilities associated with specifying the quantum state, namely the probabilities $\lambda_j; j=1,\hdots,m$. Thus, two different entropies can be considered, one quantifying just the randomness  of the phase space observables and the other also quantifying the randomness of specifying the quantum state.

When we are  quantifying just the randomness of the observables, then we must consider the probability densities in quantum coordinate phase space to be  $\rho_{r}^{\mathrm M}(\mathbf{r},t)$ and $\rho^{\mathrm {M}}_k(\mathbf{k},t)$, and the entropy can be evaluated by  the product of these distributions as follows
\begin{align}
  \entropyS^M
&=-\kernBeforeIntegral\int  \, \rho_{r}^M(\mathbf{r},t)  \rho^{\mathrm {M}}_k ( \mathbf{k},t)\,
 \, \ln \left ( \rho_{r}^M(\mathbf{r},t)  \rho^{\mathrm {M}}_k (\mathbf{k},t) \,  \right) \,
  \diff^3\mathbf{r}\, \diff^3\mathbf{k} \, .
\label{eq:differential-entropy-mixed-states-observables}
\end{align}

When  however, we attempt to quantify both sources of randomness,  the randomness associated with the observables and the randomness associated with the specification of the quantum state, then the density to be considered is  $\rho_j(\mathbf{r},\mathbf{k},t)=\lambda_j \, |\psi_j(\mathbf{r},t)|^2 \, |\phi_j(\mathbf{k},t)|^2 $, yielding the entropy
\begin{align}
  \entropyS^{M,\lambda_j}
&=-\sum_{j=1}^m \kernBeforeIntegral\int  \, \lambda_j \, |\psi_j(\mathbf{r},t)|^2  \, |\phi_{j}(\mathbf{k},t)|^2\,
 \, \ln \left ( \lambda_j \, |\psi_j(\mathbf{r},t)|^2    \, |\phi_j(\mathbf{k},t)|^2 \,  \right) \,
  \diff^3\mathbf{r}\, \diff^3\mathbf{k}
      \\
      &=-\sum_{j=1}^m \lambda_j \ln \lambda_j +\sum_{j=1}^m \lambda_j \, S_j
\,,
\label{eq:differential-entropy-mixed-states-hybrid}
\end{align}
where $S_j=-\kernBeforeIntegral\int  \, |\psi_j(\mathbf{r},t)|^2\,
 \, \ln |\psi_j(\mathbf{r},t)|^2\,
  \diff^3\mathbf{r}\, -\kernBeforeIntegral\int  \, |\phi_j(\mathbf{r},t)|^2\,
 \, \ln  |\phi_j(\mathbf{r},t)|^2 \,  \diff^3\mathbf{k}$ is the entropy of each pure state.

This entropy has two terms: the von Neumann entropy ($-\sum_{j=1}^m \lambda_j \ln \lambda_j $) and  the  average value of the entropies of the observables for each pure state  weighted by the mixed coefficients $\lambda_j$.   Clearly, the proposed entropy is larger than the von Neumann entropy.

Our proposed entropy for mixed states also  differs from Wehrl entropy because it is  a relativistic invariant based on a probability distribution of the observables,  and not on a quasiprobability distribution that lacks probability properties needed to characterize precisely the randomness of the observables.

One can interpret both entropies \eqref{eq:differential-entropy-mixed-states-observables} and \eqref{eq:differential-entropy-mixed-states-hybrid} as different generalizations to mixed states of the proposed entropy \eqref{eq:differential-entropy}. When the mixed state is reduced to one pure state, with $m=1$, both reduce to the entropy \eqref{eq:differential-entropy}.

In this paper we will focus on pure quantum states and leave as future research to extend further the study to mixed states.

\subsection{Spin-Entropy}
\label{sec:spin-DOF}
It is not possible to know simultaneously the spin of a particle in all three dimensional directions, and this uncertainty, or randomness, was exploited in the Stern–Gerlach experiment~\cite{gerlach1922experimentelle} to demonstrate the quantum nature of the spin. 
In order to identify the conjugate pair  of observables/operators associated to the spin we consider  the geometric quantization method\cite{geiger2021spin} and explore elsewhere~\cite{geiger2021spin} the entropy associated with the quantum spin phase space.

\section{Entropy Invariant Properties}
\label{sec:entropy}

\subsection{ Continuous Transformations of the Phase Space}
\label{subsec:coordinate-transformation}

In the QM setting, we investigate a point transformation of coordinates and  a translation in phase space of a  quantum reference frame \cite{aharonov1984quantum}.

Consider a  point transformation of position coordinates $F: \mathbf{r} \mapsto \mathbf{r}'$. It induces  the new conjugate momentum operator  \cite{dewitt1952point}
\begin{align}
    \label{eq:conjugate-momentum-DeWitt}
    \hat{\mathbf{p}}'=-\iu \hbar \left [\nabla_{\mathrm{r}'} + \frac{1}{2 }J^{-1}(\mathbf{r}')\nabla_{\mathrm{r}'}\cdot J(\mathbf{r}')\right]\, ,
\end{align}
where  $J(\mathbf{r}')\eqdefA J(F^{-1})(\mathbf{r}') = \frac{\partial \mathbf{r}(\mathbf{r}')}{\partial \mathbf{r}'}$ is the Jacobian  of $F^{-1}$ at $\mathbf{r}'$.

\begin{theorem}
 \label{proposition:change-coordinates-S}
The entropy  is invariant under a  point transformation of  coordinates.
\end{theorem}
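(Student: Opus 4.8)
The plan is to split the entropy as $\entropyS = \entropyS_{\mathrm{r}} + \entropyS_{k}$ and to show that the position term and the spatial-frequency term each change by a Jacobian-dependent amount, with the two changes cancelling exactly. First I would treat the position term by a direct change of variables. Since $F$ is a point transformation, probability is conserved, $\rho_{\mathrm{r}'}(\mathbf{r}',t)\,\diff^3\mathbf{r}' = \rho_{\mathrm{r}}(\mathbf{r},t)\,\diff^3\mathbf{r}$, so that $\rho_{\mathrm{r}'}(\mathbf{r}') = |\det J(\mathbf{r}')|\,\rho_{\mathrm{r}}(\mathbf{r}(\mathbf{r}'))$. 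Substituting into $\entropyS_{\mathrm{r}}$ and changing the integration variable back to $\mathbf{r}$ gives
\begin{equation}
\entropyS_{\mathrm{r}'} = \entropyS_{\mathrm{r}} - \int \rho_{\mathrm{r}}(\mathbf{r},t)\,\ln\lvert\det J\rvert\,\diff^3\mathbf{r}\,,
\end{equation}
i.e. the position entropy shifts by minus the state-average of $\ln\lvert\det J\rvert$.

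The heart of the argument is then to show that the spatial-frequency term shifts by the opposite amount, $+\int \rho_{\mathrm{r}}\,\ln\lvert\det J\rvert\,\diff^3\mathbf{r}$. For a linear transformation this is immediate: the correction term $\tfrac{1}{2}J^{-1}\nabla_{\mathrm{r}'}\cdot J$ in~\eqref{eq:conjugate-momentum-DeWitt} vanishes, $\hat{\mathbf{p}}' = -\iu\hbar\nabla_{\mathrm{r}'}$, the momentum label transforms by the constant matrix $(J^{\tran})^{-1}$, and $\rho_{k'}$ is related to $\rho_{k}$ by the constant factor $\lvert\det J\rvert^{-1}$, so the spatial-frequency entropy picks up exactly the compensating $+\ln\lvert\det J\rvert$ and the total is invariant. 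I would then verify the general case using the map $U_F:\psi(\mathbf{r})\mapsto \lvert\det J\rvert^{1/2}\,\psi(\mathbf{r}(\mathbf{r}'))$, which is unitary on $L^2$ precisely because it is the half-density (rather than scalar) transformation; this $U_F$ implements the coordinate change and conjugates $-\iu\hbar\nabla$ into the symmetric operator~\eqref{eq:conjugate-momentum-DeWitt}. The symmetric ordering encoded by the $\tfrac{1}{2}J^{-1}\nabla_{\mathrm{r}'}\cdot J$ term is exactly what makes $\hat{\mathbf{p}}'$ Hermitian with respect to $\diff^3\mathbf{r}'$, hence guarantees its eigenbasis is orthonormal and that $\rho_{k'}$ is a genuine normalized density carrying the reciprocal Jacobian weight.

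The step I expect to be the main obstacle is the nonlinear case, where $J(\mathbf{r}')$ depends on position so the induced momentum map is no longer a global change of variables acting on $\rho_{k}$ alone: position and spatial frequency become intertwined. Establishing that the spatial-frequency entropy shift is nonetheless exactly the position-space average $+\int \rho_{\mathrm{r}}\,\ln\lvert\det J\rvert\,\diff^3\mathbf{r}$ is the delicate point, and it is here that the symmetric ordering of~\eqref{eq:conjugate-momentum-DeWitt} is essential; a naive scalar transformation of the wavefunction would fail to keep $\hat{\mathbf{p}}'$ Hermitian and would spoil the cancellation. I would close by remarking that the combined position--momentum map is the quantum counterpart of a classical canonical point transformation, which preserves the phase-space volume $\diff^3\mathbf{r}\,\diff^3\mathbf{k}$ by Liouville's theorem, giving an independent check that the two Jacobian contributions must cancel and that $\entropyS = \entropyS_{\mathrm{r}} + \entropyS_{k}$ is invariant.
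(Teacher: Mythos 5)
Your treatment of the position term is exactly the paper's: probability conservation under the point transformation gives $\rho_{\mathrm{r}'}(\mathbf{r}')=|\det J(\mathbf{r}')|\,\rho_{\mathrm{r}}(\mathbf{r}(\mathbf{r}'))$ and hence $\entropyS_{\mathrm{r}'}=\entropyS_{\mathrm{r}}-\langle\ln|\det J|\rangle_{\rho_{\mathrm{r}}}$, and your linear case is correct, since there the momentum label transforms by the constant matrix $(J^{\tran})^{-1}$ and the compensating constant $+\ln|\det J|$ appears in $\entropyS_{k}$. But the nonlinear case --- which you yourself flag as ``the main obstacle'' and ``the delicate point'' --- is left unproved, and the mechanism you invoke does not close it. Hermiticity of the DeWitt operator $\hat{\mathbf{p}}'$ guarantees an orthonormal eigenbasis and a normalized density $\rho_{k'}$, but it does not tell you that $\entropyS_{k'}-\entropyS_{k}$ equals the \emph{position-space} average $+\langle\ln|\det J|\rangle_{\rho_{\mathrm{r}}}$: when $J$ depends on $\mathbf{r}'$ the induced action on momentum space is not a pointwise change of variables on $\rho_{k}$ at all, and no identity you cite converts a momentum-space integral into that position-space expectation. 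The Liouville remark is a classical plausibility check, not a quantum proof. As written, your argument establishes the theorem only for linear $F$.

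The paper closes this gap by a move you did not anticipate. Following DeWitt, the momentum-space map $G:\mathbf{p}\mapsto\mathbf{p}'$ induced by a point transformation is determined only up to an arbitrary function $g(\mathbf{p}')=\det J(G^{-1})(\mathbf{p}')$ fixing the momentum volume element through $g(\mathbf{p}')\,\diff^3\mathbf{p}'=\diff^3\mathbf{p}$. Writing the entropy in the new coordinates then produces the two correction terms $-\langle\ln\det J^{-1}(\mathbf{r}')\rangle_{\rho'_{r'}}+\langle\ln g(\mathbf{p}')\rangle_{\rho'_{p'}}$, and the paper \emph{chooses} $g$ so that these cancel. In other words, the cancellation you hoped to derive automatically is, in the paper, imposed by exploiting a residual freedom in the definition of the transformed conjugate momentum; invariance holds for that choice of $G$, not as a theorem about the symmetric-ordering choice alone. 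To repair your proof you must either establish the cancellation for your canonical $U_F$-conjugated momentum in the nonlinear case --- which you have not done --- or adopt the paper's selection of $g(\mathbf{p}')$ explicitly.
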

\begin{proof}
Let $\entropyS$ be the entropy in phase-space relative to a conjugate Cartesian pair of coordinates $(\mathbf{r}, \mathbf{p})$. Let  $\mathbf{p}'$ be the momentum conjugate to $\mathbf{r}'$.
As the probabilities in infinitesimal volumes are invariant,
\begin{align}
    \label{eq:requirement}
    |\psi'(\mathbf{r}')|^2 \diff^3 \mathbf{r}' =|\psi(\mathbf{r}(\mathbf{r}' ) )|^2 \diff^3 \mathbf{r}(\mathbf{r}' )
  \text{ and }
    |\tilde \phi'(\mathbf{p}')|^2 \diff^3 \mathbf{p}'=|\tilde \phi(\mathbf{p}(\mathbf{p}') )|^2 \diff^3 \mathbf{p}(\mathbf{p}')\,,\quad
\end{align}
where $\mathbf{r}(\mathbf{r}' )= F^{-1}(\mathbf{r}')$ and $\mathbf{p}(\mathbf{p}' )= G^{-1}(\mathbf{p}')$ with $G: \mathbf{p} \mapsto \mathbf{p}'$ specified by~\eqref{eq:conjugate-momentum-DeWitt}.
Thus,  the probability density functions are $|\psi'(\mathbf{r}')|^2 $ and $|\tilde \phi'(\mathbf{p}')|^2$.   The Jacobian satisfies $\det J(\mathbf{r}')  \diff ^3 \mathbf{r}'=\diff ^3 \mathbf{r}(\mathbf{r}' )$, and applying this to \eqref{eq:requirement} we get $|\psi'(\mathbf{r}')|^2 \diff^3 \mathbf{r}'=|\psi(F^{-1}(\mathbf{r}'))|^2 \det J(\mathbf{r}') \, \diff ^3 \mathbf{r}'$, i.e., $|\psi'(\mathbf{r}')|^2=|\psi(F^{-1}(\mathbf{r}'))|^2 \det J(\mathbf{r}') $.
Similarly, we define $g(\mathbf{p}')=\det J(G^{-1})(\mathbf{p}')$ and so $g(\mathbf{p}') \diff ^3 \mathbf{p}'=\diff ^3 \mathbf{p}$, and to satisfy the infinitesimal probability invariance in momentum space
$  |\tilde \phi(\mathbf{p})|^2\, \diff ^3 \mathbf{p} =|\tilde \phi'(\mathbf{p}')|^2 \diff ^3 \mathbf{p}'$ at $\mathbf{p}'=G(\mathbf{p})$, we obtain $  |\tilde \phi'(\mathbf{p}')|^2 =|\tilde \phi(G^{-1}(\mathbf{p}')|^2 g(\mathbf{p}')$.

As noted in \cite{dewitt1952point}, there is an arbitrariness in the choice of $G$ that allows a new transformation $G'$  to be specified by~\eqref{eq:conjugate-momentum-DeWitt} with  $\det J(G'^{-1})(\mathbf{p}')=\frac{g(\mathbf{p}')}{f(\mathbf{p}')}$, i.e., the arbitrariness of $G'$ is equivalent to the choice of  a function $f(\mathbf{p}')$ to define the determinant of its (inverse) Jacobian. Then, associated with such a $G'$ we must also define a new  density function  $|\tilde \phi'(\mathbf{p}')|^2$ scaled by $\frac{1}{f(\mathbf{p}')}$,  producing an equally valid conjugate solution. Thus,
\begin{align}
    \entropyS_{\mathrm{r}}+\entropyS_{\mathrm{p}}   &= -\kernBeforeIntegral\int
 \diff^3\mathbf{r}\, \diff^3\mathbf{p}\, \rho_{\mathrm{r}} (\mathbf{r},t)  \rho_p (\mathbf{p},t) \ln \left ( \rho_{\mathrm{r}} (\mathbf{r},t)  \rho_p (\mathbf{p},t) \right)-3 \ln \hbar
 \\
 &= \entropyS_{\mathrm{r}'}+\entropyS_{\mathrm{p}'} - \langle \ln \det J^{-1}(\mathbf{r}')\rangle_{\rho'_{r'}} + \langle \ln  g(\mathbf{p}')
 \rangle_{\rho'_{p'}}
 \\
& = \entropyS_{\mathrm{r}'}+\entropyS_{\mathrm{p}'} \,,
\end{align}
and given the arbitrariness of $G$, we chose $g(\mathbf{p}')$  to satisfy
$  \langle\ln  g(\mathbf{p}')\rangle_{\rho'_{p'}}  = \langle\ln \det J^{-1}(\mathbf{r}')\rangle_{\rho'_{r'}}$.
\end{proof}

We next investigate translation transformations.  When a  quantum reference frame  is translated by $x_0$ along
$x$, the state $\ket{\psi_t}$ in the position representation becomes  $\psi(x-x_0,t) = \bra{x-x_0}\ket{\psi_t}= \bra{x}\hat T_P(-x_0)\ket{\psi_t}$, where  $\hat T_P(-x_0)=\eu^{\iu x_0 \, \frac{1}{\hbar}\hat{P}}$, and $\hat{P}$ is the momentum operator conjugate to $\hat{X}$.  When the reference frame is translated by $p_0$ along  $p$, the  state $\ket{\psi_t}$  in the momentum representation becomes  $\tilde \phi(p-p_0,t)= \bra{p-p_0}\ket{\psi_t}= \bra{p}\hat T_X(-p_0)\ket{\psi_t}$, where  $\hat T_X(-p_0)=\eu^{\iu \frac{1}{\hbar}p_0 \, \hat{X}}$, and $\hat X$ is the position operator conjugate to $\hat{P}$.

 \begin{theorem}[Frames of reference]
   \label{proposition:frame-invariance}
The entropy of a state is invariant under a change of a quantum reference frame by translations along $x$ and along $p$.
 \end{theorem}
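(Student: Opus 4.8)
The plan is to treat the two translation types separately and to show that each leaves both the position-entropy $\entropyS_{\mathrm{r}}$ and the spatial-frequency-entropy $\entropyS_k$ unchanged, so that their sum $\entropyS=\entropyS_{\mathrm{r}}+\entropyS_k$ from~\eqref{eq:Relative-entropy-scalar} is invariant. The mechanism I would exploit is the asymmetry already visible in the setup: a reference-frame shift acts as a \emph{rigid translation of the probability density} in the representation being shifted, but only as a \emph{multiplicative phase} in the conjugate representation. Since $\entropyS_{\mathrm{r}}$ and $\entropyS_k$ decouple, I can analyze each factor under each shift independently.

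First I would take the shift by $x_0$ along $x$, writing it as a shift by the vector $\mathbf{r}_0=x_0\hat x$. In position representation $\rho'_{\mathrm{r}}(\mathbf{r})=\rho_{\mathrm{r}}(\mathbf{r}-\mathbf{r}_0)$, and since the change of variables $\mathbf{u}=\mathbf{r}-\mathbf{r}_0$ has unit Jacobian and maps $\reals^3$ onto itself, the differential-entropy integral is unchanged, giving $\entropyS'_{\mathrm{r}}=\entropyS_{\mathrm{r}}$. For the conjugate factor I would use the three-dimensional form $\hat T_P(-x_0)=\eu^{\iu \mathbf{r}_0\cdot\hat{\mathbf p}}$ together with the fact that $\hat{\mathbf p}$ is diagonal in the momentum basis, so that
\begin{align}
  \tilde\phi'(\mathbf{p})=\bra{\mathbf{p}}\eu^{\iu\mathbf{r}_0\cdot\hat{\mathbf p}}\ket{\psi_t}=\eu^{\iu\mathbf{r}_0\cdot\mathbf{p}}\,\tilde\phi(\mathbf{p})\,.
\end{align}
The pure phase has unit modulus, so $\rho'_p(\mathbf{p})=\rho_p(\mathbf{p})$; as $\rho_k=\hbar^{-3}\rho_p$ this gives $\rho'_k=\rho_k$ and hence $\entropyS'_k=\entropyS_k$. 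Adding the two contributions yields $\entropyS'=\entropyS$ for the $x$-translation.

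The shift by $p_0$ along $p$ is then dispatched by the same argument with $\mathbf{r}$ and $\mathbf{p}$ interchanged: now $\rho'_p(\mathbf{p})=\rho_p(\mathbf{p}-\mathbf{p}_0)$ gives $\entropyS'_k=\entropyS_k$ by translation invariance of the integral, while $\hat T_X(-p_0)=\eu^{\iu\mathbf{p}_0\cdot\hat{\mathbf r}}$ multiplies the position wavefunction by the unit-modulus phase $\eu^{\iu\mathbf{p}_0\cdot\mathbf r}$, leaving $\rho'_{\mathrm{r}}=\rho_{\mathrm{r}}$ and hence $\entropyS'_{\mathrm{r}}=\entropyS_{\mathrm{r}}$. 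A general reference-frame translation in phase space is a composition of an $x$-shift and a $p$-shift, so invariance under each factor yields invariance of $\entropyS$ under the composite.

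I do not anticipate a genuine obstacle: the entire content is that $|\cdot|^2$ annihilates the translation-induced phase in the conjugate variable, while the Lebesgue measure is translation invariant in the variable actually being shifted. The only point deserving care is that the entropy is defined from the \emph{joint} product density $\rho_{\mathrm{r}}\rho_k$ under a single integral, so I would first invoke its factorization into $\entropyS_{\mathrm{r}}+\entropyS_k$ (recorded in~\eqref{eq:Relative-entropy-scalar}) before applying the two independent one-variable invariances.
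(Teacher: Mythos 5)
Your proposal is correct and follows essentially the same route as the paper: decompose $\entropyS$ into $\entropyS_{\mathrm{r}}+\entropyS_{k}$, use translation invariance of the Lebesgue measure for the representation that is rigidly shifted, and observe that the conjugate representation acquires only a unit-modulus phase that $|\cdot|^2$ eliminates. The paper merely groups the four cases by which entropy term is being examined rather than by which translation is applied, but the mechanism in each case is identical to yours.
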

 \begin{proof}
Let $\ket{\psi_t}$ be a state and  $\entropyS$ its entropy.
We start by showing  that $\entropyS_{x}= -\kernBeforeIntegral\int_{-\infty}^{\infty} \diff x \, |\psi(x,t)|^2 \ln |\psi(x,t)|^2$ is invariant under two types of translations:
\begin{enumerate}
    \item[(i)]  translations along $x$ by any $x_0$
  \begin{align}
    \label{eq:3}
    \entropyS_{x+x_0}&=-\kernBeforeIntegral\int_{-\infty}^{\infty} \diff x\,  |\psi(x+x_0,t)|^2 \ln |\psi(x+x_0,t)|^2
    =\entropyS_{x}\,,
  \end{align}
which is verified by changing variables under the infinite integration interval.
     \item[(ii)]   translations along $p$ by any $p_0$
 \begin{align}
   \psi_{p_0}(x,t) &=\bra{x}\hat T_X(p_0)\ket{\psi_t} = \int_{-\infty}^{\infty} \bra{x}\hat T_X(p_0)\ket{p}\bra{p}\ket{\psi_t} \diff p \\
   &=\int_{-\infty}^{\infty} \bra{x}\ket{p+p_0}\tilde{\phi}(p,t) \diff p
      =\int_{-\infty}^{\infty}  \frac{1}{\sqrt{2\piu}}\eu^{\iu \, x\, \frac{1}{\hbar}(p+p_0)}\tilde{\phi}(p,t) \diff p
      \\
      &
    = \psi(x,t) \,  \eu^{\iu \, x\, \frac{1}{\hbar}p_0}\,,
 \end{align}
 implying $|\psi_{p_0}(x,t)|^2 = |\psi(x,t)|^2$.
 \end{enumerate}
Similarly, by applying both translations to $\entropyS_{\mathrm{p}}=- \kernBeforeIntegral\int_{-\infty}^{\infty} \diff p \, |\tilde{\phi}(p,t)|^2 \ln |\tilde{\phi}(p,t)|^2$ we conclude that $\entropyS_{\mathrm{p}}$ is invariant under them too. Therefore $\entropyS=\entropyS_{x}+\entropyS_{\mathrm{p}}-3\ln \hbar$ is invariant under  translations in both $x$ and  $p$.
\end{proof}

\subsection{CPT Transformations}
\label{subsec:cpt}

We will be focusing on fermions, and thus on the Dirac spinors equation, though the results apply to bosons as well. The QFT Dirac Hamiltonian is
\begin{align}
    {\cal H}^{\mathscr{D}}=\int \diff^3 \mathbf{r} \, \Psi^{\dagger}(\mathbf{r},t) \left ( -\iu \planckbar\gamma^0 \vec{\gamma} \cdot \nabla+ m c \gamma^0  \right )\Psi(\mathbf{r},t)\,.
\end{align}
A QFT solution $\Psi(\mathbf{r},t)$ satisfies  $[{\cal H}^{\mathscr{D}},\Psi(\mathbf{r},t)]=-\iu \hbar \frac{\partial \Psi(\mathbf{r},t) }{\partial t}$ and the $C$, $P$, and  $T$ symmetries provide new solutions from  $\Psi(\mathbf{r},t)$.
   As usual,
  $\Psi^{\mathrm{C}}(\mathbf{r},t)= C\overline{\Psi}^{\tran}(\mathbf{r},t)$,
  $\Psi^\mathrm{P}(-\mathbf{r},t)= P\Psi(-\mathbf{r},t)$,
  $ \Psi^\mathrm{T}(\mathbf{r},-t)= T\Psi^*(\mathbf{r},-t)$,
and  $\psi^{\mathrm{CPT}}(-\mathbf{r},-t)= CPT\overline{\psi}^{\tran}(-\mathbf{r},-t)$.  For completeness, we briefly review the three operations, Charge Conjugation, Parity Change, and Time Reversal.

 Charge Conjugation  transforms particles $\Psi(\mathbf{r},t)$ into antiparticles  $\overline{\Psi}^\tran(\mathbf{r},t)= (\Psi^{\dagger}\gamma^0)^\tran(\mathbf{r},t)$.  As  $ C \gamma^{\mu } C^{-1}= -  \gamma^{\mu \tran}$,   $\Psi^{\mathrm{C}}(\mathbf{r},t)$ is  also  a solution for the same Hamiltonian. In the standard representation, $C=\iu \gamma^2\gamma^0$ up to a phase.
 Parity Change $P=\gamma^0$, up to a sign,  effects the transformation $\mathbf{r} \mapsto -\mathbf{r}$.
  Time Reversal  effects $t \mapsto -t$ and is carried by the operator $\mathscr{T}=T \hat K$, where $\hat K$ applies conjugation. In the standard representation $ T=\iu \gamma^1\gamma^3$, up to a phase.

\begin{theorem}[{Invariance of  the entropy under CPT-transformations}]
  \label{proposition:Entropy-CPT-invariant}
  Given a quantum field $\Psi(\mathbf{r},t)$, its Fourier transform $\Phi(\mathbf{k},t)$,  and its entropy $\entropyS_t$,
the entropies of $\, \Psi^{*}(\mathbf{r},t)$,   $\Psi^\mathrm{P}(-\mathbf{r},t)$, $\Psi^{\mathrm{C}}(\mathbf{r},t)$, $\Psi^\mathrm{T}(\mathbf{r},-t)$,  $\Psi^{\mathrm{CPT}}(-\mathbf{r},-t)$, and their corresponding Fourier transforms   are all equal to $\entropyS_t$.
\end{theorem}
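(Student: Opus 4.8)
The plan is to exploit the fact, visible from \eqref{eq:Relative-entropy-QFT-scalar} and \eqref{eq:QFT-density-position}, that $\entropyS=\entropyS_{\mathrm{r}}+\entropyS_{k}$ depends on the field only through the two scalar densities $\rho_{\mathrm{r}}(\mathbf{r},t)=\bra{\state}\Psi^\dagger(\mathbf{r},t)\Psi(\mathbf{r},t)\ket{\state}$ and $\rho_{\mathrm{k}}(\mathbf{k},t)$, and that these enter only through the bilinear $\Psi^\dagger\Psi$. The key structural observation is that each of the five operations acts on $\Psi$ as a composition of exactly three elementary moves: (a) left multiplication by one of the Dirac matrices $C$, $P=\gamma^0$, $T=\iu\gamma^1\gamma^3$, each unitary up to a phase; (b) complex conjugation of the field; and (c) reflection of the spatial argument $\mathbf{r}\mapsto-\mathbf{r}$ and/or of the time label. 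So I would first isolate a single lemma, that \emph{each} elementary move leaves both $\entropyS_{\mathrm{r}}$ and $\entropyS_{k}$ invariant, and then obtain every case by composition.

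First I would dispatch the unitary-matrix move: since $U^\dagger U=I$ for $U\in\{C,P,T\}$, the bilinear satisfies $(U\Psi)^\dagger(U\Psi)=\Psi^\dagger\Psi$ pointwise, so $\rho_{\mathrm{r}}$ is literally unchanged, and $\rho_{\mathrm{k}}$ likewise after passing the constant matrix through the Fourier transform. Next, reflection: $\mathbf{r}\mapsto-\mathbf{r}$ sends $\rho_{\mathrm{r}}(\mathbf{r})\mapsto\rho_{\mathrm{r}}(-\mathbf{r})$, a change of variables with unit Jacobian, so $\entropyS_{\mathrm{r}}=-\int\rho_{\mathrm{r}}\ln\rho_{\mathrm{r}}\,\diff^3\mathbf{r}$ is unchanged; and a spatial reflection induces $\mathbf{k}\mapsto-\mathbf{k}$ on the Fourier side, again unit Jacobian, leaving $\entropyS_{k}$ fixed. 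The move requiring the most care is conjugation: pointwise $|\Psi^*|^2=|\Psi|^2$ leaves $\rho_{\mathrm{r}}$ invariant, while the Fourier transform of $\Psi^*$ is $\Phi^*(-\mathbf{k})$, so $\rho_{\mathrm{k}}(\mathbf{k})\mapsto\rho_{\mathrm{k}}(-\mathbf{k})$, once more a unit-Jacobian reflection preserving $\entropyS_{k}$.

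With the lemma in hand, each case is a one-line composition. For $\Psi^*$ only move (b) occurs; for $\Psi^\mathrm{P}(-\mathbf{r},t)=\gamma^0\Psi(-\mathbf{r},t)$ only (a) and (c); for $\Psi^{\mathrm{C}}=C\overline{\Psi}^{\tran}=C(\gamma^0)^{\tran}\Psi^*$ two unitary factors and a conjugation; for $\Psi^{\mathrm{T}}=T\Psi^*$ a unitary factor and a conjugation; and $\Psi^{\mathrm{CPT}}$ is the ordered composition of all of these together with the double reflection $(\mathbf{r},t)\mapsto(-\mathbf{r},-t)$. Since $\Psi^{\mathrm{T}}=T\Psi^*$ holds pointwise in time, the densities of original and time-reversed field agree on each equal-time slice, so the comparison is genuinely with $\entropyS_t$. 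In every case the net effect on $\rho_{\mathrm{r}}$ and $\rho_{\mathrm{k}}$ is at most an argument reflection, so by the lemma $\entropyS_{\mathrm{r}}$, $\entropyS_{k}$, and hence $\entropyS$, are unchanged; the Fourier-transform statements come for free because the lemma was established for both densities at once.

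The main obstacle I anticipate is bookkeeping rather than conceptual: threading the complex conjugation through the Fourier transform so the induced flip $\mathbf{k}\mapsto-\mathbf{k}$ is tracked consistently, and composing the several reflections and conjugations in the full CPT case without a spurious sign. A secondary point needing explicit care is the operator-valued nature of $\Psi$ in QFT: for the antilinear operations ($C$, $T$, $CPT$) one must check that $\bra{\state}(\Psi^{X})^\dagger\Psi^{X}\ket{\state}$ really reduces to the reflected density, which again follows from unitarity of the Dirac matrices together with the reality of the scalar bilinear, but should be stated rather than merely inferred from the first-quantized wavefunction picture.
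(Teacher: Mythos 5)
Your proposal is correct and follows essentially the same route as the paper's proof: both reduce each transformation to the observation that the bilinear $\Psi^\dagger\Psi$ is unchanged by the unitary Dirac matrices $C$, $P$, $T$ and by complex conjugation, so the densities agree pointwise up to an argument reflection of unit Jacobian, and the same holds on the Fourier side. Your packaging as a single lemma about three elementary moves composed case by case is slightly more systematic than the paper's direct case-by-case density computation, and your explicit tracking of the induced $\mathbf{k}\mapsto-\mathbf{k}$ flip under conjugation makes precise a step the paper only asserts, but the substance is the same.
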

\begin{proof}
  The probability densities of
  $\Psi^{*}(\mathbf{r},t)$,
  $\Psi^\mathrm{T}(\mathbf{r},-t)$, $\Psi^\mathrm{P}(-\mathbf{r},t)$,
  $\Psi^{\mathrm{C}}(\mathbf{r},t)$,
  and
  $\Psi^{\mathrm{CPT}}(-\mathbf{r},-t)$   are
 \begin{align}
   \rho^{*}_{\mathrm{r}}(\mathbf{r},t)&=\Psi^{\tran}(\mathbf {r},t) \Psi^*(\mathbf {r},t) =\Psi^{\dagger}(\mathbf {r},t)\Psi(\mathbf {r},t)=\rho(\mathbf{r},t)\,,
   \\
   \rho^{\mathrm{C}}_{\mathrm{r}}(\mathbf{r},t)&=\left(\overline{\Psi}^{\tran}\right)^{\dagger}(\mathbf{r},t)  \CC^{\dagger} \CC  \overline{\Psi}^{\tran}(\mathbf{r},t)   = \overline{\Psi}^{*}(\mathbf{r},t)\overline{\Psi}^{\tran}(\mathbf{r},t) = \rho_{\mathrm{r}}(\mathbf{r},t)\,,
   \\
   \rho^{\mathrm{P}}_{\mathrm{r}}(-\mathbf{r},t)&=\Psi^{\dagger}(\mathbf {r},t) (\gamma^0)^{\dagger} \gamma^0 \Psi(\mathbf {r},t)  = \Psi^{\dagger}(\mathbf {r},t)\Psi(\mathbf {r},t)= \rho_{\mathrm{r}}(\mathbf{r},t)\,,
   \\
   \rho^{\mathrm{T}}_{\mathrm{r}}(\mathbf{r},-t)&=\Psi^{\tran}(\mathbf {r},t) T^{\dagger} T \Psi^*(\mathbf {r},t)  = \Psi^{\tran}(\mathbf {r},t)\Psi^*(\mathbf {r},t)= \rho_{\mathrm{r}}(\mathbf{r},t)\,,
   \\
\rho^{\mathrm{CPT}}_{\mathrm{r}}(-\mathbf{r},-t)&=\left(\overline{\Psi}^{\tran}\right)^{\dagger}(\mathbf {r},t)  (\CC P T)^{\dagger} (\CC P T) \overline{\Psi}^{\tran}(\mathbf {r},t)   = \rho_{\mathrm{r}}(\mathbf{r},t)\,.
   \label{eq:CPT-densitities}
 \end{align}
 As  the densities are equal, so are the associated entropies.

Equations~\eqref{eq:CPT-densitities}  also hold for $\Phi(\mathbf{k},t)$ and its density.
Thus, both entropies terms in  $\entropyS_t=\entropyS_r+\entropyS_k$ are invariant under all CPT transformations.
\end{proof}

\subsection{Lorentz Transformations}
\label{subsec:Lorentz-transformation}

\begin{theorem}
\label{proposition:Lorentz-scalar}
  The entropy  is a {relativistic scalar.}
\end{theorem}
\begin{proof}
  The probability elements
    $\diff \pr(\mathbf{r},t) =\rho_{\mathrm{r}}(\mathbf{r},t)\diff^3\mathbf{r}$
  and
  $\diff \pr(\mathbf{k},t) =\rho_{\mathrm{k}}(\mathbf{k},t)\diff^3\mathbf{k}$
are invariant under Lorentz transformations because event probabilities  do not depend on the frame of reference. Consider a slice of the phase space with  frequency  $\omega_{\mathrm{k}}= \sqrt{ \mathbf{k}^2 c^2+ \left(\frac{m c^2}{\hbar}\right )^2}$. The volume elements $\frac{1}{\omega_{\mathrm{k}}} \diff^3\mathbf{k}$ and $\omega_{\mathrm{k}} \diff^3\mathbf{r}$,  are invariant under the Lorentz group~\cite{weinberg1995quantum1}, that is,
  $\frac{1}{\omega_{\mathrm{k}}} \diff^3\mathbf{k}=\frac{1}{\omega_{\mathrm{k}'}} \diff^3\mathbf{k}'$
  and
$\omega_{\mathrm{k}} \diff^3\mathbf{r}=\omega_{\mathrm{k}'} \diff^3\mathbf{r}'$,
implying
$     \diff V =  \diff^3\mathbf{k} \diff^3\mathbf{r} = \diff^3\mathbf{k}' \diff^3\mathbf{r}' =  \diff V '$,
where $\mathbf{r}'  $, $\mathbf{k}', $ and $ \omega_{\mathrm{k}'}$ result from applying  a Lorentz transformation  to  $\mathbf{r}$, $\mathbf{k}$, and $\omega_{\mathrm{k}}$. Thus, from the probability invariant elements we conclude that
 $\frac{1}{\omega_{\mathrm{k}}}\rho_{\mathrm{r}}(\mathbf{r},t)$ and $\omega_{\mathrm{k}} \rho_{\mathrm{k}}(\mathbf{k},t)$ are also
invariant under the  group. Thus, the phase space density $\rho_{\mathrm{r}}(\mathbf{r},t) \rho_{\mathrm{k}}(\mathbf{k},t)$ is an invariant under Lorentz transformations. Therefore the entropy  is  a relativistic scalar.
\end{proof}
Note that in QFT, one  scales the  operator  $\Phi(\mathbf{k},t)$ by $\sqrt{2\omega_{\mathrm{k}}}$, that is, one scales the creation and the annihilation operators ${\alpha}^{\dagger}(\mathbf{k})=\sqrt{\omega_{\mathrm{k}}}\, \mathbf{a}^{\dagger}(\mathbf{k})$ and ${\alpha}(\mathbf{k})=\sqrt{\omega}\,  \mathbf{a}(\mathbf{k})$. In this way,  the density operator $\Phi^{\dagger}(\mathbf{k},t)\Phi(\mathbf{k},t)$
scales with $\omega_{\mathrm{k}}$ and becomes  a relativistic scalar. Also, with such a scaling, the  infinitesimal  probability of finding a particle with momentum $\mathbf{p}=\hbar \mathbf{k}$ in the original reference frame is invariant  under the Lorentz transformation, though it would be found with  momentum $\mathbf{p}'=\hbar \mathbf{k}'$.

\section{The Minimum Entropy Value}
\label{subsec:minimum-entropy}

The third law of thermodynamics establishes $0$ as the minimum classical entropy. However, the minimum of the  quantum entropy must be positive  due to the uncertainty principle's lower bound.  Let $\uptheta(x)$ be $1$ for positive $x$ and $0$ elsewhere.
\begin{theorem}
 \label{proposition:lower-bound-S}
The minimum entropy of a particle  with spin $s$ is $3 (1+\ln \piu)+\uptheta(s)\ln 2\piu$.
\end{theorem}
\begin{proof}
  The entropy is the sum of the coordinate-entropy and the spin-entropy. The coordinate-entropy~\eqref{eq:differential-entropy} is $\entropyS_{\mathrm{r}}+ \entropyS_{\mathrm{k}}$.  Due to the entropic uncertainty principle
$\entropyS_{\mathrm{r}}+ \entropyS_{\mathrm{k}}\ge 3 \ln \ee \piu$ as shown in~\cite{hirschman1957note,beckner1975inequalities,  bialynicki1975uncertainty}, with
  $\entropyS_{\mathrm{k}}=\entropyS_{\mathrm{p}}-3\ln\hbar $.
To complete the proof, in~\cite{geiger2021spin}  we showed that the minimum spin-entropy  is $\uptheta(s)\ln 2\piu$.
\end{proof}
Higgs bosons in coherent states have the lowest possible entropy $3 (1+\ln \piu)$.

The dimensionless element of volume of integration to define the entropy  will not contain a particle unless  $\diff^3\mathbf{r}\, \diff^3\mathbf{k}\ge 1$, due to the  uncertainty principle, and this may be interpreted as a necessity of discretizing the phase space. We  note that  the  minimum entropy of the discretization of~\eqref{eq:differential-entropy}  is also $3 (1+\ln \piu)$, as shown in~\cite{dembo1991information}.

We point out that coherent states minimize   the uncertainty principle and as shown by Lieb~\cite{lieb2002proof} they also minimize Wehrl entropy. And as we show here,  they also minimize our proposed entropy.

\section{Entropy Evaluation for Some Physical Scenarios}
\subsection{The Coordinate-Entropy of Coherent States Increases With Time}
\label{sec:coherent-states}

 Coherent states, represented by state $\ket{\alpha}$,
are eigenstates of the annihilator operator. The 1D quantum phase space of observable variables $(x,p)$ can be constructed by the unitary  operator $U(x_0,p_0)=\eu^{\frac{\iu}{\hbar} (x_0 X -p_0 P) }$ applied to zero-state  $\ket{x=0,p=0}$, i.e., they can be constructed as  $\ket{\alpha}=\ket{x_0,p_0}=\eu^{\frac{\iu}{\hbar} (x_0 X -p_0 P)}\ket{0,0}$, where $\alpha=x_0+\iu p_0$. Projecting the state to  position space yields $\psi_{\alpha}(x)=\bra{x}\ket{\alpha}=\frac{\eu^{-\frac{p_0^2}{2}}}{\piu^{\frac{1}{4}}}\eu^{-\frac{1}{2}\left(x-\sqrt{2}\alpha\right)^2}$, where $\alpha=\frac{1}{\sqrt{2}}(x_0+\iu p_0)$. Squeeze states extend coherent states to all eigenstate solutions of the annihilator operator by allowing different variances to the Gaussian solution,  and together their representation in 3D position and momentum space are
\begin{align}
 \psi_{\mathrm{k}_0}(\mathbf {r}-\mathbf {r}_0)& =\frac{1}{2^3 \piu^{\frac{3}{2}}(\det \matrixsym{\Sigmabold})^{\frac{1}{2}}}\, \normalx{\mathbf {r}}{\mathbf {r}_0}{\matrixsym{\Sigmabold} }\, \eu^{\iu \mathbf {k}_0\cdot \mathbf {r}}\,,
 \\
 \Phi_{\mathrm{r}_0} (\mathbf {k}-\mathbf {k}_0) &=\frac{1}{2^3 \piu^{\frac{3}{2}}(\det \matrixsym{\Sigmabold}^{-1})^{\frac{1}{2}}}\, \normalx{\mathbf {k}}{\mathbf {k}_0}{\matrixsym{\Sigmabold}^{-1} }\, \eu^{\iu (\mathbf {k}-\mathbf {k}_0)\cdot \mathbf {r}_0}\,,
 \label{eq:coherent-state-3D}
\end{align}
where  $ \matrixsym{\Sigmabold} $ is the spatial covariance matrix. We will continue to refer to these states as coherent, with the understanding that we are including squeezed states, and that replacing the general covariance   $ \matrixsym{\Sigmabold} $ by $\mathrm{I}$  reduces to  the formal definition of coherent states.
The foundational material follows from most common textbooks, e.g., \cite{1968DiracLectures,weinberg1995quantum1,cohen2019quantum,Sakurai2017quantum}.

In \cite{GeigerKedem2021c} we proved that an initial coherent state~\eqref{eq:coherent-state-3D}  evolving according to the  energy
 $\hbar \omega(\mathrm{k})=\hbar \sqrt{ \mathbf{k}^2 c^2+ \left(\frac{m c^2}{\hbar}\right )^2}$, the entropy evolves as $ 3 (1+  \ln \piu )  + \frac{1}{2}\ln \det\big ( \Identitymatrix+ t^2 (\matrixsym{\Sigmabold}^{-1} \hessian)^2\big)$, where
 \begin{align}
\hessian_{ij}\eqdefA\hessian_{ij}(\mathbf{k}_0)& = \frac{\planckbar}{m} \left (1+\left (\frac{\planckbar \matrixsym{k}_0}{m c}\right)^2\right)^{-\frac{3}{2}}\left [ \deltau_{i,j}\left (1+\left (\frac{\planckbar \matrixsym{k}_0}{m c}\right)^2 \right) -\left (\frac{\planckbar \matrixsym{k_0}_i}{m c}\right) \left (\frac{\planckbar \matrixsym{k_0}_j}{m c}\right)\right ]\,,
\label{eq:Fourier-group-Hessian}
\end{align}
and $\hessian$ is positive definite. This suggests that quantum physics has an inherent dispersion mechanism to increase entropy for free fermion particles. Note that for coherent states of photons,  no dispersion occurs as  the electromagnetic Hamiltonian is non-dispersive.

\subsection{The Hydrogen Atom and Photon Emission}
\label{subsec:stab-hydr-ground}

The QED Hamiltonian  for the hydrogen atom is
\begin{align}
    H(p,r,q) & = \sum_{i=1}^3 \frac{\left( p^i-\frac{\ee}{c} A^i(q)\right )^2}{2 m} - \frac{\ee^2}{r}+\sum_{\lambda=1}^2  \planckbar \omega_{q} \, a^{\dagger }_{\lambda} (q)\, a_{\lambda}(q)\,,
\end{align}
where the  photon's helicity  $\lambda=1,2$,  $\omega_q=|q|c$,  the creation and the annihilation operators  of photons satisfy
 $[a_{\lambda}(p), a^{\dagger }_{\lambda'} (q)]=\deltau_{\lambda, \lambda'}\deltau(p-q)$, and the electromagnetic vector potential is
 \begin{align}
  \tilde A^i(q) & = \sqrt{2\piu  \planckbar c^2}\sum_{\lambda=1}^2 \frac{1}{\sqrt{\omega_q}} \left ( \epsilon^i_{\lambda}(q)\, a_{\lambda}(q)  + \epsilon^{*\iu }_{\lambda}(q) \, a^{\dagger}_{\lambda} (q)\right)\,,
 \end{align}
 and in  the Coulomb Gauge ($\nabla \cdot  A=0$), for $q=|q| ( \sin \theta_q \cos \phi_q, \sin \theta_q \sin  \phi_q, \cos \theta_q)$, the polarizations satisfy  $\epsilon_{1}(q) = ( \cos \theta_q \cos \phi_q, \cos \theta_q \sin  \phi_q, \sin \theta_q)$ and $\epsilon_{2}(q)  = (-\sin \phi_q, \cos \phi_q, 0)$.

 The state of the atom can be described by $\ket{n,l,m}_{{\mathrm e}^-}\ket{q,\lambda}_{\upgamma}$,
where  $n,l,m$ are the quantum numbers of the electron ${\mathrm e}^-$, and $q $ and $\lambda$ are the  momentum and the  helicity of the photon $\upgamma$.
 We next consider the  Lyman-alpha transition, $\ket{n=2,l=1,m=0}\ket{0} \, \rightarrow \, \ket{n=1,l=0,m=0}\ket{q,\lambda}$ with the emission of a photon with wavelength
 $\lambda\approx \SI{121.567d-9}{\metre}$.

We first evaluate the electron's entropy at both states $\ket{n=2,l=1,m=0}$ and $\ket{n=1,l=0,m=0}$. For simplicity, we consider the \Schroedinger approximation to describe the electron state with the energy change in this transition of $\Updelta E_{n=2\rightarrow n=1}\approx-\left(\frac{1}{2^2} - 1\right) \times \SI{13.6}{\electronvolt} = \SI{10.2}{\electronvolt}$. We now compute the difference between the final and the initial state entropy in three steps.
\begin{enumerate}
    \item[(i)]The position probability amplitudes described in \cite{bransden2003physics} and the associated entropies are
    \begin{align}
        \psi_{2, 1, 0}(\rho,\theta,\phi) &= \frac{1}{\sqrt{32\piu}} \left(\frac{1}{a_{0}}\right)^{\frac{3}{2}}\: \rho \ee^{-\frac{\rho}{2}} \cos(\theta)\: \rightarrow \,  \entropyS_{\mathrm{r}}(\psi_{2, 1, 0})\approx \num{6.120}+\ln \piu+3 \ln a_0\,,
        \\
        \psi_{1, 0,0}(\rho,\theta,\phi) &= \frac{1}{\sqrt{\piu}} \left(\frac{1}{a_{0}}\right)^{\frac{3}{2}}\: \ee^{-\rho} \:  \rightarrow \,  \entropyS_{\mathrm{r}}(\psi_{1, 0,0})\approx\num{3.000}+\ln \piu+3 \ln a_0 \,,
    \end{align}
     where $a_0 \approx \SI{5.292d-11}{\metre}$ is the Bohr radius, and $\rho={r}/{a_0}$.
    \item[(ii)] The momentum probability amplitudes described in \cite{bransden2003physics} and the associated entropies are
     \begin{align}
        \Phi_{2, 1, 0}(p, \theta_p, \phi_p)  &= \sqrt{\frac{128^2}{2\piu p_0^3}} \, \frac{p}{p_0} \, \left (1+\left(2\frac{p}{p_0}\right)^2\right )^{-3} \,\cos (\theta_p) \,,\\
        & \hspace*{-2em} \rightarrow \hspace{0.75em} \entropyS_p(\Phi_{2, 1, 0})\approx \num{0.042}+3 \ln p_0\,,
        \\
        \Phi_{1, 0, 0}(p, \theta_p, \phi_p) &= \sqrt{\frac{32}{\piu \, p_0^3}} \left ( 1+ \left(\frac{p}{p_0}\right)^2\right )^{-2} \,, \\
        & \hspace*{-2em}  \rightarrow \hspace{0.75em}  \entropyS_p(\Phi_{1, 0, 0})\approx \num{2.422}+3 \ln p_0\,  ,
    \end{align}
     where $p_0={\si{\planckbar}}/{a_0}$.
     \item[(iii)] Therefore, $\Updelta \entropyS_{2,1,0\rightarrow 1,0,0}=\entropyS_{\mathrm{r}}(\psi_{1, 0, 0})+ \entropyS_p(\Phi_{1, 0, 0}) -\entropyS_{\mathrm{r}}(\psi_{2, 1, 0})- \entropyS_p(\Phi_{2, 1, 0}) \approx \num{-0.740}\,.$
\end{enumerate}
Thus, the entropy of the electron is reduced by approximately $\num{0.740}$ during the transition $\ket{n=2,l=1,m=0} \, \rightarrow \, \ket{n=1,l=0,m=0}$.

We next evaluate the entropy associated with the randomness in the emission of the photon.  Due to energy conservation, the  energy must satisfy $|q| c\approx \SI{10.2}{\electronvolt}$, where $c$ is the speed of light. The associated energy uncertainty is very small. The main randomness for the photon is in specifying the direction of the emission.  The  angular momentum of the electron along $z$ ($m=0$) does not change between $\ket{n=2,l=1,m=0}$ and  $\ket{n=1,l=0,m=0}$.  The spin $1$ of the photon is along its motion, and conserves the total angular momentum of the system.  Thus, to conserve angular momentum along $z$,  the photon must be moving perpendicularly to the $z$ axis, that is, $\theta_q=\frac{\piu}{2}$, and so the polarization vectors  must be $   \epsilon_{1}(q) = ( 0, 0, 1) $ and $\epsilon_{2}(q)  = (-\sin \phi_q, \cos \phi_q, 0)$. The angle $\phi_q$ is completely unknown,  with the  entropy $\ln 2\piu$. Then we observe that the entropy increases, as
\begin{align}
     \Updelta \entropyS_{\ket{n=2,l=1,m=0}\ket{0} \, \rightarrow \, \ket{n=1,l=0,m=0}\ket{q,\lambda}} &\approx  \ln 2\piu - \num{0.740}
     =\num{1.098}\, .
\end{align}

Consider now  an apparent time-reversing scenario in which an apparatus emitted photons with energy $E_{\upgamma}=\hbar |\omega_{n=2,l=1,m=0}-\omega_{n=1, l=0, m=0}|$ to strike a hydrogen atom with its electron in the ground state. The photon had to follow a precise direction  towards the  atom, and  a very small uncertainty in the direction implies low photon entropy. Once the atom absorbs the photon,  the energy of the electron in the ground state suffices for a jump into an excited state. The entropy increases again, as the entropy of the excited state is larger than the entropy of the ground state (accounting for the low photon entropy).

\section{An Entropy Law and a Time Arrow}
\label{sec:entropy-law}

In classical statistical \mechanics, the entropy provides a time arrow through the second law of thermodynamics \cite{clausius1867mechanical}.  We have  shown that due to the dispersion property of the fermionic Hamiltonian, some states, such as  coherent states,  evolve with an increasing entropy.  However, current quantum physics is time reversible, and it is possible to  have states evolution  where the entropy oscillates. This includes the scenario in the Hydrogen atom studied earlier, where the excited state of the electron with no photon and the ground state of the electron with a photon emission are two possible states where quantum physics describe as an oscillation which we show elsewhere \cite{GeigerKedem2021c} leads to the entropy oscillation. 

We conjecture that isolated quantum systems (or states) can not have the total randomness associated with them decrease. 

\begin{law*}[The Entropy Law]
 \label{postulate:1}
 The entropy of an isolated quantum system is an increasing function of time.
\end{law*}
It is an information-theoretic conjecture about isolated quantum states, whereby information (the inverse of the entropy) can not be gained. 

An evidence for such a law is the Hydrogen atom scenario discussed earlier. According to QED, and due to photon fluctuations of the vacuum, the state of an electron in an excited state of the hydrogen atom  is in a superposition with the ground state, and we show elsewhere \cite{GeigerKedem2021c} the entropy would  decrease within a time interval ${2\piu}/{|\omega_{n=2,l=1,m=0}-\omega_{n=1, l=0, m=0}|}$. Instead, interrupting the oscillation, the electron jumps to the ground state and a photon is created/emitted, increasing the entropy. We conjecture that the entropy law is the trigger for the photon creation. 

Note that we also analyzed that this process is not reversible. Apparent reversibility when bombarding a hydrogen atom with photons targeted at it, creates photons with low entropy. Then the higher entropy solution is the absorption of the photon with the electron jumping to the excited state.

We  complete the paper wondering  if all  quantum states are always under the unitary evolution dictated by the Hamiltonian of the system as current QM asserts.  More precisely, in light of the conjectured entropy law, would all QM  evolution scenarios  produce an entropy that is always  increasing  (and by running it time backwards it would be  always decreasing) ? if so, the entropy law would simply discard the unitary evolution time direction where the entropy decreases. However, if under unitary evolution entropy oscillation scenarios can occur,  as described by the Fermi Golden Rule  transition \cite{fermi1950nuclear, dirac1927quantum}, then the entropy law would change our description of QM. The entropy law would trigger the collapse of a superposition of states to a state where the new evolution will have the entropy increasing.    The new initial state could represent the annihilation of particles and creation of new particles. Possibly, it  could describe  beta radiation, or the emission of the photon when the electron falls to the ground state in the Hydrogen atom, or the collision of two particles  to yield the creation of new particles.  In this case, like in the Copenhagen interpretation  of QM  the collapse of the state would occur, but different then the Copenhagen interpretation  it would   not require a measurement (nor an observer).  In fact a measurement could cause the collapse of the state due to its impact to the entropy evolution of the system (including the measurement apparatus).

\section{Conclusions}
\label{sec:conclusion}
Capturing all the information of a quantum state requires specifying the parameters associated with the DOFs of a quantum state as well as the intrinsic randomness of the quantum state. The intrinsic randomness is associated with a conjugate pair of observables, satisfying the uncertainty principle. 
We proposed a coordinate-entropy defined in the quantum phase spaces, the space of all possible states projected in the Fourier conjugate basis of position and spatial frequency.  Even though these observables are the same variables considered for defining the classical entropy, the  motivation and quantification is quite different. For the classical case, the randomness originates in the practical difficulties in specifying the DOFs precisely, while for the quantum pure state case the randomness is due to the intrinsic  quantum  state observables  characterized by a pair of conjugate observables that satisfy the uncertainty principle. 

This definition of the coordinate-entropy and quantum phase spaces possesses  desirable properties, including invariance in special relativity, and invariance  under CPT transformations.   We extended this entropy for the more general case where there is a randomness associated with specifying the quantum state, leading to a mixed quantum state. For mixed states, the entropy is always larger than von Neumann entropy due to the accounting of the randomness associated with the observables of each pure state.

We showed that due to the dispersion property of a fermionic Hamiltonian,    initial coherent states  will have the entropy increase over time.  We showed that the entropy increases when an electron in an excited state of the hydrogen atom falls to the ground state emitting a photon.  

We conjectured an entropy law that a closed quantum system must have the entropy to increase.  The motivation for the law is that information (inverse of the amount of randomness) can not increase in a closed quantum system. This law implies the irreversibility of time for scenarios where the entropy is not constant. 

We wonder if QM unitary evolution of states, with a dynamic given by the standard model (assuming it to be the correct one to describe quantum physics)  have the entropy to always increase (and by running the unitary evolution backwards, it would always decrease) so that oscillations do not occur. Then, the entropy law would simply discard the unitary evolution time direction where the entropy could be reduced.  Or, if  the unitary evolution leads to entropy oscillation scenarios, as it occurs for a two state system, and for the oscillation scenarios the entropy law  trigger the collapse of a superposition of states to a state where the new evolution will have the entropy increasing. Such collapse is accompanied by  particle creation or  annihilation.  In this case, the entropy law  determines that the event of particle creation and/or annihilation do occur, regardless of an observer performing a measurement.    Perhaps even the concept of measurement is a concept of a physical process that activates the conjectured entropy law so that particles are created and observed.  Thus, for example, the phenomena described by the double slit experiment would imply that at the sensors screen the absorption (annihilation) of the particle passing through the double slit occurs accompanied by the collapse of the particle state.  In this view, the QM superposition of states description accompanied by unitary evolution is completed with this entropy law, that effectively can cause the collapse of the state to a new state, where particles are created and annihilated.

\section{Acknowledgement} This paper is partially based upon work supported by both the National Science Foundation under Grant No.~DMS-1439786 and the Simons Foundation Institute Grant Award ID 507536 while the first author was in residence at the Institute for Computational and Experimental Research in Mathematics in Providence, RI, during the spring 2019 semester ``Computer Vision''  program.

 \bibliographystyle{abbrv}
 \bibliography{gk01}
\end{document}